\let\save@mathaccent\mathaccent
\newcommand*\if@single[3]{%
  \setbox0\hbox{${\mathaccent"0362{#1}}^H$}%
  \setbox2\hbox{${\mathaccent"0362{\kern0pt#1}}^H$}%
  \ifdim\ht0=\ht2 #3\else #2\fi
  }
\newcommand*\rel@kern[1]{\kern#1\dimexpr\macc@kerna}
\newcommand*\widebar[1]{\@ifnextchar^{{\wide@bar{#1}{0}}}{\wide@bar{#1}{1}}}
\newcommand*\wide@bar[2]{\if@single{#1}{\wide@bar@{#1}{#2}{1}}{\wide@bar@{#1}{#2}{2}}}
\newcommand*\wide@bar@[3]{%
  \begingroup
  \def\mathaccent##1##2{%
    \let\mathaccent\save@mathaccent
    \if#32 \let\macc@nucleus\first@char \fi
    \setbox\z@\hbox{$\macc@style{\macc@nucleus}_{}$}%
    \setbox\tw@\hbox{$\macc@style{\macc@nucleus}{}_{}$}%
    \dimen@\wd\tw@
    \advance\dimen@-\wd\z@
    \divide\dimen@ 3
    \@tempdima\wd\tw@
    \advance\@tempdima-\scriptspace
    \divide\@tempdima 10
    \advance\dimen@-\@tempdima
    \ifdim\dimen@>\z@ \dimen@0pt\fi
    \rel@kern{0.6}\kern-\dimen@
    \if#31
      \overline{\rel@kern{-0.6}\kern\dimen@\macc@nucleus\rel@kern{0.4}\kern\dimen@}%
      \advance\dimen@0.4\dimexpr\macc@kerna
      \let\final@kern#2%
      \ifdim\dimen@<\z@ \let\final@kern1\fi
      \if\final@kern1 \kern-\dimen@\fi
    \else
      \overline{\rel@kern{-0.6}\kern\dimen@#1}%
    \fi
  }%
  \macc@depth\@ne
  \let\math@bgroup\@empty \let\math@egroup\macc@set@skewchar
  \mathsurround\z@ \frozen@everymath{\mathgroup\macc@group\relax}%
  \macc@set@skewchar\relax
  \let\mathaccentV\macc@nested@a
  \if#31
    \macc@nested@a\relax111{#1}%
  \else
    \def\gobble@till@marker##1\endmarker{}%
    \futurelet\first@char\gobble@till@marker#1\endmarker
    \ifcat\noexpand\first@char A\else
      \def\first@char{}%
    \fi
    \macc@nested@a\relax111{\first@char}%
  \fi
  \endgroup
}
\theoremstyle{plain}
\date{}
\newcommand{\opd}{\, \mathrm{d}}
\begin{document}

\title{Linear prediction of point process times and marks}

\shorttitle{Prediction of point process}
\authornames{M.~AIGNER AND V.~CHAVEZ-DEMOULIN}

\authorone[University of Lausanne]{Maximilian Aigner}
\authortwo[University of Lausanne]{Valerie Chavez-Demoulin}

\address{Department of Operations, University of Lausanne, 1015 Lausanne, Switzerland}
\addresstwo{Department of Operations, University of Lausanne, 1015 Lausanne, Switzerland}

\emailone{maximilian.aigner@unil.ch}

\begin{abstract}



In this paper, we are interested in linear prediction of a particular kind of stochastic process, namely a marked temporal point process. The observations are event times recorded on the real line, with marks attached to each event. We show that in this case, linear prediction extends straightforwardly from the theory of prediction for stationary stochastic processes. Following classical lines, we derive a Wiener-Hopf-type integral equation to characterise the linear predictor, extending the ``model independent origin'' of the Hawkes process \citep{jaisson_market_2015} as a corollary.

We propose two recursive methods to solve the linear prediction problem and show that these are computationally efficient in known cases. The first solves the Wiener-Hopf equation via a set of differential equations. It is particularly well-adapted to autoregressive processes. In the second method, we develop an innovations algorithm tailored for moving-average processes. A small simulation study on two typical examples shows the application of numerical schemes for estimation of a Hawkes process intensity.

\end{abstract}

\keywords{marked point process, linear filtering, Fredholm integral equation, recursive estimation}
\ams{60G55}{62M20}

\section{Introduction} 

\subsection{Marked point processes}

We begin by reviewing the main properties associated with a marked point process on the real line


Suppose $N$ is a marked point process on $\mathbb{R}$ with mark space $\left(\mathbf{M}, \mathcal{M}\right)$. The $i$-th event in the process $N$ occurs at a random time $T_i \geqslant 0$, and has an associated random mark $M_i \in \mathbf{M}$. We have a Dirac representation valid for $A \subseteq \mathbb{R}$ and $B \in \mathcal{M}$,
\begin{equation}
N(A \times B) = \sum_{i=1}^\infty \delta_{(T_i, M_i)}\left(A \times B\right),
\end{equation}
where the random times are an increasing sequence, $T_i \leq T_{i+1}$ for $i \in \mathbb{N}$. In this paper we restrict ourselves to simple processes for which no two times coincide, so that $T_i < T_{i+1}$.

Let $\mathcal{H}_t$ be the joint internal history of times and marks up to but not including time $t$, that is, 
$$
\mathcal{H}_t = \sigma\left(\left\{N((a, b) \times B \Big\vert a,b < t, B \in \mathcal{M}\right\}\right)
$$
The internal history process $\mathcal{H}_t$ represents the observations about $N$ that are known at time $t$.
We suppose that $N$ admits an $\mathcal{H}_t$-compensator, that is, a stochastic process $\Lambda$ indexed by $\mathbb{R} \times \mathbf{M}$ verifying
\begin{equation}
\mathbb{E} \bigg[ N\Big((a,b) \times M\Big) \bigg\vert \mathcal{H}_a \bigg] = \Lambda\big((a,b) \times M\big).
\label{eq:compensator_def}
\end{equation}
for every $a<b$ and $M \in \mathcal{M}$.
The above formula defines $\Lambda$ as a conditional mean measure of the process. It is a ``smoothing'' formula in the sense that integrals with respect to the pure jump process $N$ can be replaced, in expectation, by integrals with respect to the (usually) smoother process $\Lambda$.

In the simplest case, when $N$ is an unmarked ($\mathbf{M} = \left\{ 1 \right\}$) Poisson process with deterministic intensity $\lambda(t)$, we have simply $\Lambda(t) = \int_0^t \lambda(s) \mathrm{d}s$. 
More generally, if $\Lambda$ is absolutely continuous in its first argument with respect to Lebesgue measure, then it admits a density $\lambda(t, d\mu)$, and for any $M \in \mathcal{M}$ the equation \eqref{eq:compensator_def} implies the simpler formula
\begin{equation}
\mathbb{E} \bigg[ N\Big( (a, b) \times M\Big) \bigg| \,\mathcal{H}_a \bigg] = \int_a^b \int_M \lambda(t, \mathrm{d}\mu) \opd t.
\label{eq:compensator_with_density}
\end{equation}
where $\lambda(t, \mu)$, the conditional intensity, is a random function of $\mathcal{H}_t$ for every $\mu$.

If we disregard the marks, what remains is the ground process $\widebar{N} (t) = N\left([0, t) \times \mathbf{M}\right)$, for which a conditional intensity is given by
$$
\bar\lambda(t) = \int_\mathbf{M} \lambda(t, \mathrm{d}\mu).
$$
\begin{remark} In many applications the marked conditional intensity is taken to be separable, namely
$$
\lambda(t, \mathrm{d}\mu) = \bar\lambda(t) F_t(\mathrm{d}\mu)
$$
where $F_t$ is a fixed (not random) measure for every $t$. In this situation the past marks can influence the conditional intensity in the present, but the future marks depend only on time and not on the process history. The simplest case has $F_t \equiv F$ for every $t$, in which case marks are i.i.d.~according to $F$.
\end{remark}


In the sequel, we will use second-order properties of the point process as well as the concept of stationarity. We briefly present these notions in our context. Suppose that $N$ is a square-integrable point process so that
$$
\mathbb{E}\left[N(t,B)\right], \mathbb{E}\left[N(t,B)^2\right] < +\infty.
$$

The mean or first moment measure of $N$ is the unique deterministic measure $M_1$ on $ \mathbb{R} \times \mathbf{M}$ which satisfies
\begin{equation}
\mathbb{E}\left[ \iint  f(t, \mu) N(\mathrm{d}t, \mathrm{d}\mu)\right] = \iint  f(t, \mu)\, M_1(\mathrm{d} t, \mathrm{d} \mu)
\label{eq:m1_measure}
\end{equation}
for all functions $f$ which are continuous with compact support. A shorthand version of \eqref{eq:m1_measure} is written
$
\mathbb{E}\left[N(\mathrm{d} t, \mathrm{d} \mu)\right] = M_1(\mathrm{d} t, \mathrm{d} \mu).
$
The second-order measure $M_2$ is the unique measure satisfying
$$
\mathbb{E}\left[N(\mathrm{d}t_1, \mathrm{d}\mu_1)N(\mathrm{d}t_2, \mathrm{d}\mu_2)\right] = M_2(\mathrm{d}t_1, \mathrm{d}t_2, \mathrm{d}\mu_1, \mathrm{d}\mu_2)
$$
which is made formal by integrating both sides against the product of two continuous functions with compact support. Finally, the covariance measure is defined as
$$
C(\mathrm{d}t_1, \mathrm{d}t_2, \mathrm{d}\mu_1, \mathrm{d}\mu_2) = M_2(\mathrm{d}t_1, \mathrm{d}t_2, \mathrm{d}\mu_1, \mathrm{d}\mu_2) - M_1(\mathrm{d}t_1,\mathrm{d}\mu_1) M_1(\mathrm{d}t_2, \mathrm{d}\mu_2).
$$
which can be made formal in the same way.

Each of the measures $M_1$ and $M_2$ represent expected behaviour of the pure jump process $N$, and typically have less jumps. In particular, they may admit densities with respect to time which we denote $m_1, m_2$:
\begin{eqnarray*}
M_1(\mathrm{d}t, B) = m_1(t, B)\, \mathrm{d}t  & M_2(\mathrm{d}t, \mathrm{d}s, B_1, B_2) =  m_2(t, s, B_1, B_2) \, \mathrm{d}t \,\mathrm{d}s
\end{eqnarray*}
We also define $c_2$ as the density of $C_2$, when this exists.

For the task of prediction, it is important that certain characteristics of the process remain the same throughout time. A strong form of stationarity supposes that the law of $N$ is invariant to shifts in time; this is usually impracticable. The concept of weak stationarity requires that the first and second moment measures do not change across time, and in particular are invariant to shifts in time. Thus, the process $N$ is said to be weakly stationary if the following two conditions hold \citep{daley_introduction_2003-1}
\begin{align} 
M_1(\mathrm{d}t, \mathrm{d}\mu) &= \widebar{\lambda} \opd t\, F(\mathrm{d}\mu) \label{eq:stationary_m1}\\
M_2(\mathrm{d}t_1, \mathrm{d}t_2, \mathrm{d}\mu_1, \mathrm{d}\mu_2) &= \mathrm{d}t_1 M_2(\mathrm{d}(t_2 - t_1), \mathrm{d}\mu_1, \mathrm{d}\mu_2) \label{eq:stationary_m2}
\end{align}
for some real number $\widebar{\lambda} > 0$ and measure $F$ called the stationary mark distribution. Equation \eqref{eq:stationary_m2} means that the second-order measure (or the covariance measure) evaluated at two points depends on time only via the difference of both times. In the unmarked case where $\mathbf{M}$ reduces to a single point, these relations reduce to the stationarity conditions for a continuous-time stochastic process. Note, however, that we make no assumption about the last two arguments of $M_2$.

\paragraph{Example: the Hawkes process.}
One process in particular plays an important role in applications, namely the marked Hawkes process \citep{daley_introduction_2003, zugec_marked_2019}. This is a branching process on $\mathbb{R} \times \mathbf{M}$ composed of a marked Poisson process of immigrants, each of which initiates a family of descendants. The immigrants and all their descendants generate offspring according to an inhomogeneous marked Poisson process. It can be shown that such a process has the conditional intensity
\begin{equation}
\lambda(t, y) = \eta + \int_{-\infty}^t \int_\mathbf{M} K(t-s, \nu, y) N(\mathrm{d}s, \mathrm{d\nu}).
\label{eq:marked_hawkes}
\end{equation}
where $K(h, \nu, B)$ is the inhomogeneous Poisson process intensity for points with marks in $B$ and at temporal lag $h$ from their parent, whose mark has value $\nu$.
In \eqref{eq:marked_hawkes} the intensity at any time $t$ is influenced by past times and marks through the kernel $K$, which verifies three properties: $K\geq 0$, ensuring that the intensity remains positive; $K(h, x, y) = 0$ if $h < 0$, so that there is no dependence on the future; and $\mathbb{E} \int_0^{+\infty} K(h,x,y) \opd h < 1$ for every $B$, a stability property which ensures that the process of generations does not diverge by reaching infinite size in finite time.


\subsection{Mean-square prediction}

In this section, we describe a few properties of mean-square prediction. These are not exactly specific to marked point processes but rather derive from properties of conditional expectation.






Suppose observations are given from a stationary square-integrable marked point process $N$ over the time interval $[0, s)$. The aim of this paper consists of predicting the path of the point process $N$ up to some future time $t$. The prediction is a stochastic process which uses the information of times and marks up to (but not including) a time $s \leqslant t$. 
More precisely, define the predictable $\sigma$-algebra $\mathcal{P}(\mathcal{H}_s)$ as that generated by sets of the form
$$
(u, s] \times A \times B, \text{ with } u \leqslant s,\, A \in \mathcal{H}_u \text{ and } B \in \mathcal{M}.
$$
A stochastic process $\xi$ is then called $\mathcal{H}_t$-predictable if $\xi$ is measurable with respect to $\mathcal{P}(\mathcal{H}_t)$, for every $t$. Roughly speaking, predictable processes are almost equivalent to left-continuous processes whose value $\xi(t, \cdot)$ is measurable with respect to $\mathcal{H}_t$.

\begin{remark}
In the setting of prediction from time $s$ up to time $t$, we will suppose that no new information arrives after point $s$, that is, the prediction $\xi(t)$ is measurable with respect to $\mathcal{P}(\mathcal{H}_{s})$ for all $t \geqslant s$. 
\end{remark}



\begin{remark}
These definitions lead to another way of viewing the $\mathcal{H}_t$-compensator $\Lambda$: it is the $\mathcal{H}_t$-predictable projection \citep{jacod_multivariate_1975} of $N$ onto $\mathcal{H}_t$, consisting of that part of the process $N$ that can (in principle) be predicted from the past. The innovation martingale $\varepsilon_N = N - \Lambda$ is a component which is unpredictable given the history.
\end{remark}

To evaluate the accuracy of a predictor for $N$, we use the mean-square error for its attractive mathematical properties. We then turn to finding the best linear predictor in the sense of minimising the squared error. This setting is exactly similar to that of minimising prediction error of a time series \citep{brockwell_time_2006}.

To sum up, in this work we focus on predicting $N\big((s, t), B\big)$, where $B \in \mathcal{M}$ is a set of marks, for arbitrary $t \geqslant s$ based on observation of $N$ up to time $s$, in the mean-square sense. This problem can be formulated as follows: 
\begin{align}
\text{ minimise} \quad & 
\mathbb{E}\left[ \Big(N\big((s, t), B\big) - \widehat{N}\big((s, t), B\big) \Big)^2 \right] \label{prob:min_square} \tag{BP}\\
\text{with respect to} \quad & \widehat{N} \in \mathcal{P}(\mathcal{H}_s)\nonumber
\end{align}

It is clear that in the absence of other constraints, the minimum of \eqref{prob:min_square} is achieved by the $\mathcal{H}_s$-compensator of $N$. This is a consequence of the representation \eqref{eq:compensator_def} for the compensator as a conditional expectation and the classical fact that the conditional expectation is a minimiser of the mean square error. Thus, if we denote $N^*$ the minimiser of \eqref{prob:min_square}, $N^*\big((s, t), B\big) = \mathbb{E}\left[ N\big((s, t), B\big) \Big\vert \mathcal{H}_s\right] 
= \Lambda\big((s, t), B\big)$
is the best overall predictor of $N\big((s,t), B\big)$. In fact, for any other predictor $\widehat{N}$ of $N$, we have
\begin{align}
&\mathbb{E}\left[ \Big(N\big((s, t), B\big) - \widehat{N}\big((s, t), B\big) \Big)^2 \right] \nonumber\\
=\ & \mathbb{E}\left[ \Big(N\big((s, t), B\big) - \Lambda\big((s, t), B\big) \Big)^2 \right] + \mathbb{E}\left[ \Big(\Lambda\big((s, t), B\big) - \widehat{N}\big((s, t), B\big) \Big)^2 \right]  \nonumber\\
\quad &+ \mathbb{E}\left[ \left( N\big((s,t), B\big) - \Lambda\big((s,t), B\big) \right) \left( \Lambda\big((s,t), B\big) - \widehat{N}\big((s, t), B\big)\right)\right]. \label{eq:threeterms}
\end{align}
In this expression, the first term does not depend on $\widehat{N}$, and the third term vanishes due to the definition \eqref{eq:compensator_def} of the compensator. Indeed,
\begin{align*}
&\mathbb{E}\left[ \Big(N\big((s, t), B\big) - \Lambda\big((s, t), B\big)\Big)\Big(\Lambda\big((s, t), B\big) - \widehat{N}\big((s, t), B\big)\Big) \right]  \\
&= \mathbb{E}\left[ \mathbb{E}\left[ \Big(N\big((s, t), B\big) - \Lambda\big((s, t), B\big)\Big)\Big(\Lambda\big((s, t), B\big) - \widehat{N}\big((s, t), B\big)\Big) \Big\vert \mathcal{H}_s \right] \right] \\
&= \mathbb{E}\left[ \Big(\Lambda\big((s, t), B\big) - \widehat{N}\big((s, t), B\big)\Big)  \mathbb{E}\left[ \Big(N\big((s, t), B\big) - \Lambda\big((s, t), B\big)\Big)\Big\vert \mathcal{H}_s \right] \right] =0
\end{align*}
where the second equality follows from $\mathcal{H}_s$-predictability of $\widehat{N}$ and $\Lambda$.
 It suffices therefore to minimise the second term of \eqref{eq:threeterms}, which implies that the target of prediction is ultimately the compensator $\Lambda$. This is intuitive and appealing as we do not predict the remainder $\varepsilon_N = N - \Lambda$, which is the innovations martingale.

The developments so far are not specific to point processes. We will now turn to the more specific question of linear prediction of $N$.

The next sections are laid out as follows. In Section \ref{sec:linear}, we specialise the means-square prediction problem to \emph{linear} prediction, which is tractable and has well-known commonalities with discrete-time theory. The problem can be cast as one of projection onto linear subspaces, and leads to an integral equation with special structure (Theorem \ref{thm:innov}).
In Section \ref{sec:computing_blp}, we discuss computing the linear predictor in several ways. A first method relies on an autoregressive formulation of the linear predictor as the solution to a set of differential equations; we show this is well-suited to the Hawkes process (Section \ref{sec:ar_example}). In the second method, we derive an innovations algorithm and suggest it applies best to Neymann-Scott processes (Section \ref{sec:ma_example}).
To illustrate our proposed methods, we conduct a simulation study in Section \ref{sec:simulations} on two typical examples. Finally, we conclude by suggesting several avenues of further research in line with our findings.




\section{Unbiased linear mean-square prediction}
\label{sec:linear}


If a parametric form for $\Lambda$ is known, then prediction reduces to least-square estimation of the parameters, either analytically or through a numerical scheme. However, the form of the optimal predictor $\Lambda$ is usually not available, since knowledge of $\Lambda$ is close to knowledge of the whole process. In the absence of a known parametric form for $\Lambda$, we propose a linear approximation since it is both simple and computationally tractable.

First, note that so far no hypotheses regarding the compensator $\Lambda$ were needed. In the sequel, we will suppose that the compensator admits a time-intensity.

\begin{assumption}
The compensator $\Lambda$ is absolutely continuous with respect to Lebesgue measure in its first argument, so that it admits a conditional intensity $\lambda$,
\begin{equation*}
\Lambda\left((a, b), B\right) = \int_a^b \lambda(u, B)\, \mathrm{d}u
\end{equation*}
\label{as:intensity}
\end{assumption}


\begin{remark}
The compensator $\Lambda(t, B)$ predicts $N(t, B)$, and thus its conditional intensity $\lambda(t, B)$ should predict $N(\mathrm{d}t, B)$, which can be seen as taking value $0$ or $1$ depending on whether there is an event at $\left\{t\right\} \times B$.
\end{remark}


Now that the existence of a time-density of $N$ is assumed, and in view of the remarks in the previous section, we note that the optimisation problem \eqref{prob:min_square} can be written slightly differently,
\begin{align}
\text{ minimise} \quad & 
C(\psi) = \left\| \int_s^t \left(\lambda(u, B) - \widehat{\lambda}(u, B) \right)\mathrm{d}u \right\|,\label{prob:min_square_alt} \tag{BP'}\\
\text{s.t.} \quad & \widehat{\lambda} \in \mathcal{P}(\mathcal{H}_s)\nonumber
\end{align}

In this problem, the objective function is the integrated mean-square error, which is significantly more difficult to analyse than the pointwise approximation error of $\lambda(t, B)$. Moreover, there is no way, to our knowledge, to recursively solve the problem when $t$ is increased, that is, the problem has to be solved from scratch for every new value of $t$. 
As we will see next, much more can be said about the ``pointwise'' optimisation problem of minimising the integrand,
$$
\left\|\lambda(t, B) - \widehat{\lambda}(t, B)\right\|
$$
for arbitrary $t \geqslant s$ and $B \in \mathcal{M}$.

\subsection{The linear history}

In order to construct an unbiased linear predictor, we will project $N$ onto the subspace $\mathcal{L}_s$ of $\mathcal{H}_s$-predictable processes which are linear with nonzero mean. To be more specific, we have the following definition which follows \cite{lindquist_optimal_1973}.
\begin{definition}
For every $s > 0$, we define the linear span of the (strict) past history at time $s$ as
\begin{align}
\mathcal{L}_s &=  \mathrm{closure}\ \mathrm{of}\  \Bigg\{ f_0 + \sum_{i=1}^n f_i \left(N(t_i, A_i) - N(t_i^-, A_i) \right) : n \in \mathbb{N}, f_1, \ldots, f_n \in \mathbb{R}, \\ & \qquad \qquad \qquad 0 \leqslant t_1, \ldots, t_n < s, A_i \in \mathcal{M} \Bigg\}
\end{align}
Any element of $\mathcal{L}_s$ takes the form
\begin{equation}
X = X_0 + \iint_{[0, s) \times \mathbf{M}} f(u, \mu) N(\mathrm{d}u, \mathrm{d}\mu)
\label{eq:lin_form}
\end{equation}
with $f : \mathbb{R} \times \mathbf{M} \to \mathbb{R}$ a function and $X_0$ a real-valued random variable. 
\end{definition}

Since the linear predictor is the projection $\widehat{\lambda}(t, B | s)$ of $N(t, B)$ onto $\mathcal{L}_s$, it belongs to the latter and is of the form \eqref{eq:lin_form}. Thus, for every $t \geqslant s, B \in \mathcal{M}$, the linear predictor can be written 
\begin{equation}
\widehat{\lambda}(t, B | s) = \widehat{\lambda}_0(t, B) + \iint_{[0, s) \times \mathbf{M}} G(t, u, B, \mu | s) N(\mathrm{d}u, \mathrm{d}\mu),
\label{eq:gfun}
\end{equation}
where $G$ depends on the time $t$ and mark set $B$ at which a prediction is required. The superscript $s$ indicates the limits of the available history. 

We will later use (in Section \ref{sec:innov}) a variant of \eqref{eq:gfun} which expresses $\widehat{\lambda}$ as a linear combination of the linear innovations instead,
\begin{equation}
\widehat{\lambda}(t, B | s) = \widehat{\lambda}_0(t, B) + \iint_{[0,s) \times \mathbf{M}} R(t, u, B, \mu | s) \Big(N(\mathrm{d}u, \mathrm{d}\mu) - \widehat{\lambda}(u, \mathrm{d}\mu | u)\, \mathrm{d}u \Big)
\label{eq:pred_innov}
\end{equation}
which is justified because $\widehat{\lambda}(u, A | u) \in \mathcal{L}_s$ since $\mathcal{L}_u$ is a subspace of $\mathcal{L}_s$.

It can be shown that 
\begin{equation}
G(t, r, B, A | s) = \Big\langle \lambda(t, B) - \widehat{\lambda}(t, B | s), \lambda(r, A) - \widehat{\lambda}(r, A | s) \Big\rangle
\end{equation}
and also
\begin{equation}
G(t, r, B, A | s) = G(r, t, A, B | s).
\end{equation}

By its definition, the linear predictor $\widehat{\lambda}$ is not, in general, an unbiased predictor. However, note that if
\begin{align}
\mathbb{E}\left[\widehat{\lambda}(t, B | s)\right] &= \widehat{\lambda}_0(t, B) + \iint_{[0, s) \times \mathbf{M}} G(t, u, B, \mu | s) \widebar{\lambda} \mathrm{d}u F(\mathrm{d}\mu)  \\
&= \widehat{\lambda}_0(t, B) + \widebar{\lambda} \mathbb{E}\left[\int_{[0, s)} G(t, u, B, M | s)\, \mathrm{d}u\right]
\end{align}
where the random variable $M$ is distributed according to the stationary mark distribution $F$. This implies that by setting
\begin{equation}
\widehat{\lambda}_0(t, B) = \widebar{\lambda}\left(F(B) - \mathbb{E}\left[\int_{[0, s)} G(t, u, B, X | s)\, \mathrm{d}u\right]\right)
\label{eq:pred_intercept}
\end{equation}
the predictor will be made unbiased. In the remainder of this paper, we will work only with unbiased predictors.


\begin{remark}
Equations \eqref{eq:gfun} and \eqref{eq:pred_innov} represent two views of the linear predictor, one ``autoregressive'' and one ``moving average'' form. In particular, there is a resemblance between \eqref{eq:gfun} and the definition \eqref{eq:marked_hawkes} of a marked Hawkes process. This resemblance can be emphasised by setting $s = t$ and rewriting \eqref{eq:gfun} in terms of $K(t, r, A, B) = G(t, t-r, B, A | t)$.
\end{remark}




Finally, as explained earlier, the linear prediction problem we will solve is the pointwise problem
\begin{align}
\text{ minimise}\quad & 
\Big\|\lambda(t, B) - \widehat{\lambda}(t, B | s)\Big\|\label{prob:min_square_lin} \tag{BLP}\\
\text{s.t.} \quad &  \widehat{\lambda} \in \mathcal{L}_s,\ \widehat{\lambda} \text{ unbiased}\nonumber.
\end{align}
where we denote
$
\langle X, Y \rangle = \operatorname{Cov}(X, Y) = \mathbb{E}\left[XY\right] - \mathbb{E}\left[X\right]\mathbb{E}\left[Y\right] \text{ and }
\| X \| = \sqrt{\langle X, X \rangle}.
$

The characteristic property of the best predictor is that it minimises the distance between $\lambda(t, B)$ and $\mathcal{L}_s$, or in other words it is the projection of $\lambda(t, B)$ onto $\mathcal{L}_s$. In the next section, we derive an integral equation corresponding to this fact.

\subsection{Wiener-Hopf-type equation for $G$}

The main result of this section is the following theorem.

\begin{theorem}
Let $\widehat{\lambda}(t, B |s)$ be the best linear unbiased predictor of $N(\mathrm{d}t, B)$, i.e.~the minimiser of \eqref{prob:min_square_lin} in terms of $G$. Then, it obeys the following integral equation
\begin{align}
c(t-r, A, B) = \widebar{\lambda} F(B) G(t, r, B, A | s) + \iint_{(-\infty, s) \times \mathbf{M}} G(t, h, B, \nu | s) c(h-r, A, \mathrm{d}\nu) \opd h
\label{eq:wh-generalised}
\end{align}
for $r < s$ and $A \in \mathcal{M}$, and where $c$ is the covariance measure of $N$.
\label{thm:wh}
\end{theorem}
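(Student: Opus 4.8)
The plan is to characterise the minimiser of \eqref{prob:min_square_lin} through the projection theorem in the Hilbert space with inner product $\langle\cdot,\cdot\rangle=\operatorname{Cov}(\cdot,\cdot)$: the best linear unbiased predictor is the orthogonal projection of $\lambda(t,B)$ onto the closed subspace $\mathcal{L}_s$, so the prediction error $e(t,B):=\lambda(t,B)-\widehat\lambda(t,B\mid s)$ is orthogonal to every element of $\mathcal{L}_s$. Since covariances ignore additive constants, the unbiasedness normalisation \eqref{eq:pred_intercept} decouples from the equations determining $G$, and it suffices to impose orthogonality against a generating family of $\mathcal{L}_s$; I would take the increments $N((r_1,r_2],A)$ with $0\le r_1<r_2<s$ and $A\in\mathcal{M}$. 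A standard approximation argument (these increments span $\mathcal{L}_s$, and $\langle\cdot,\cdot\rangle$ is continuous by square-integrability of $N$) reduces the problem to the normal equations
\[
\langle \lambda(t,B),\,N((r_1,r_2],A)\rangle=\langle \widehat\lambda(t,B\mid s),\,N((r_1,r_2],A)\rangle ,
\]
and differentiating in $r_2$ turns these into a pointwise identity in a variable $r<s$.

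First I would compute the left-hand side. Because $r<s\le t$, the increment $N((r_1,r_2],A)$ is $\mathcal{H}_t$-measurable and $t\notin(r_1,r_2]$, so by Assumption \ref{as:intensity} together with the tower property and the defining relation \eqref{eq:compensator_def} of the compensator,
\[
\mathbb{E}\!\left[N(\mathrm dt,B)\,N((r_1,r_2],A)\right]=\mathbb{E}\!\left[\lambda(t,B)\,N((r_1,r_2],A)\right]\opd t .
\]
Subtracting the product of the means and using stationarity \eqref{eq:stationary_m1} (which gives $\mathbb{E}[\lambda(t,B)]=\widebar{\lambda}F(B)$, consistently with $\mathbb{E}[N(\mathrm dt,B)]=\widebar{\lambda}F(B)\opd t$), this becomes $\operatorname{Cov}\!\left(\lambda(t,B),N(\mathrm dr,A)\right)=c(t-r,A,B)\opd r$: the covariance of the unobservable intensity with a past increment equals the covariance of $N$ with that increment. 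This is how the covariance measure $c$ of $N$ enters on the left of \eqref{eq:wh-generalised}.

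Next I would expand the right-hand side using the representation \eqref{eq:gfun}. The deterministic intercept $\widehat\lambda_0(t,B)$ drops out of the covariance, leaving $\iint_{[0,s)\times\mathbf{M}}G(t,u,B,\mu\mid s)\,\langle N(\mathrm du,\mathrm d\mu),N(\mathrm dr,A)\rangle$. The crucial structural fact is that the second-order, hence covariance, measure of a \emph{simple} marked point process splits into an absolutely continuous off-diagonal part (with density $c(u-r,\cdot,\cdot)$) and an atom carried on the time-diagonal $u=r$; for a simple process $N(\mathrm du,\cdot)^2=N(\mathrm du,\cdot)$, so by stationarity \eqref{eq:stationary_m1} this atom is governed by the measure $\widebar{\lambda}\,F$. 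Splitting the integral over $u\in[0,s)$ accordingly, the diagonal atom contributes the boundary term $\widebar{\lambda}F(B)\,G(t,r,B,A\mid s)$ on the right of \eqref{eq:wh-generalised} (this is the step where the mark-space bookkeeping must be done carefully), while the off-diagonal part contributes $\iint_{(-\infty,s)\times\mathbf{M}}G(t,h,B,\nu\mid s)\,c(h-r,A,\mathrm d\nu)\opd h$, the lower limit of integration extending to $-\infty$ once $N$ is regarded as the stationary process on $\mathbb{R}$ whose past is observed. Equating the two sides yields \eqref{eq:wh-generalised}, and the symmetries $G(t,r,B,A\mid s)=G(r,t,A,B\mid s)$ and $c(h,A,B)=c(-h,B,A)$ confirm that the equation is invariant under exchanging $(t,B)$ and $(r,A)$, as it should be.

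The main obstacle is precisely the careful treatment of the diagonal atom and of the mark-space bookkeeping: one must justify, at the level of measures, that integrating $G$ against the self-covariance measure of $N$ separates cleanly into the atomic contribution $\widebar{\lambda}F(B)G(t,r,B,A\mid s)$ and the regular integral, and that this is exactly what differentiating the increment identity in $r_2$ produces. The remaining ingredients — that the increments generate $\mathcal{L}_s$, continuity of the inner product, and integrability of the right-hand side — are routine consequences of square-integrability and stationarity of $N$.
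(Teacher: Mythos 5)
Your proof is correct, but it is not the route the paper designates as its proof of Theorem \ref{thm:wh}: the appendix proof is variational, writing the mean-square error as an explicit quadratic functional $J(G)$ of the kernel, computing the Gateaux derivative $\mathrm{d}_\gamma J(G)$ along a test perturbation $\gamma$, and reading off \eqref{eq:wh-generalised} as the Euler--Lagrange (first-order stationarity) condition. What you give instead is the orthogonality-condition derivation, which the paper itself only sketches informally in the main text right after the theorem statement. The two are of course two faces of the projection theorem, but they buy different things. The variational route makes visible the convexity/coercivity structure of the objective (which the paper later invokes to discuss existence and uniqueness of $G$ in a reflexive Banach space), at the cost of a longer computation in which the diagonal atom of the second moment measure appears twice. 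Your route is shorter and isolates the two genuinely nontrivial ingredients more cleanly: (i) the martingale/compensator identity letting you replace $\operatorname{Cov}(\lambda(t,B),N(\mathrm{d}r,A))$ by the covariance measure $c(t-r,A,B)\opd r$ of $N$ itself --- a step the paper's sketch silently assumes --- and (ii) the splitting of $\mathbb{E}[N(\mathrm{d}u,\mathrm{d}\mu)N(\mathrm{d}r,A)]$ into a diagonal atom governed by $\widebar{\lambda}F$ plus an absolutely continuous part. On point (ii) you are right to flag the mark-space bookkeeping as the delicate step: carried out literally, the atom contributes $\widebar{\lambda}\int_A G(t,r,B,\mu\,|\,s)\,F(\mathrm{d}\mu)$, and indeed the paper itself is not consistent here (the theorem statement has $\widebar{\lambda}F(B)G$, while the appendix equation \eqref{eq:wh-marked-2} has $\widebar{\lambda}F(A)G^*$); making that term precise requires fixing whether $G$'s mark arguments are points or sets, which neither you nor the paper fully resolves, but this is a notational issue rather than a gap in your argument.
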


A variational proof of Theorem \ref{thm:wh} is reported in full to the Appendix, but \eqref{eq:wh-generalised} can also be derived using the orthogonality condition of $\psi$. indeed, by construction, the remainder $N(t, B) - \widehat{\lambda}(t, B | s)$ is orthogonal to the linear history $\mathcal{L}_s$ up to time $s$, and in particular it is orthogonal to $N(r, A)$ for $r < s$ and $A \in \mathcal{M}$. Thus,
\begin{equation}
\Big\langle \lambda(t, B) - \widehat{\lambda}(t, B | s), N(r, A)\Big\rangle = 0
\end{equation}
and
\begin{align*}
&c(t-r, A, B) = \iint_{(-\infty, s) \times \mathbf{M}} G (t, h, \nu, B | s) \,\mathbb{E}\left[ N(r, A) N(\mathrm{d}h, \mathrm{d}\nu) \right] \\
&= \int_{-\infty}^s \int_{-\infty}^r \int_{\mathbf{M}^2} G (t, h, \nu, B | s) \,\mathbf{1}_A(z) \,\mathbb{E}\left[ N(\mathrm{d}u, \mathrm{d}z) N(\mathrm{d}h, \mathrm{d}\nu) \right] \\
&= \int_{-\infty}^s \int_{-\infty}^r \int_{\mathbf{M}^2}  G (t, h, \nu, B | s) \,\mathbf{1}_A(z)\,\Big( \mathrm{d}u \,\mathrm{d}h \,\mathrm{d}z  \, \widebar{\lambda} F(\mathrm{d}z) \delta(u-h) \delta(z-\nu) + c(h-\mathrm{d}u, A, \mathrm{d}\nu) \Big) \\
&= \widebar{\lambda} F(B) G(t, r, B, A | s) + \iint_{(-\infty, s) \times \mathbf{M}} G(t, h, B, \nu | s) \,c(h-r, A, \mathrm{d}\nu) \opd h
\end{align*}

Theorem \ref{thm:wh} extends the ``model independent origin'' of the Hawkes process, as presented in the thesis of \cite{jaisson_market_2015}, to the marked point process case. The key idea is that the marked Hawkes process with parameters $\eta$ and $K$ satisfies the Wiener-Hopf equation (rewritten with $K$ instead of $G$) when $s = t$. Thus, an estimate for the best linear predictor in the instantaneous case $s=t$ can be obtained by fitting a Hawkes process, with the support of $K$ sufficiently large.

A special case of \eqref{eq:wh-generalised} was derived by \citep{bacry_first-_2016} with continuous marks and a separable kernel $G$. The emphasis there is on the Hawkes process and its estimation. While we also note the role of the Hawkes process as a purely autoregressive process (see Section \ref{sec:dl}) the focus here remains on linear prediction in general.




Under some hypotheses on the function $G$, one can show that the optimisation problem \eqref{prob:min_square_lin} admits a unique global minimum. More precisely, if $G(\cdot, | s)$ belongs to a reflexive Banach space of functions, then since the objective function of \eqref{prob:min_square_lin} is strictly convex and coercive in $K$, there exists a unique global minimum \citep[for example]{ekeland_convex_1999}. Examples of such reflexive Banach spaces are the Hilbert space $L^2(\mathbb{R} \times \mathbf{M}^2)$, which also plays a key role in the discrete-time theory \citep{brockwell_time_2006}.

It is worth questioning, however, whether the existence of solutions in $L^2$ is relevant if the model is applied to data directly; the conditions for existence of a stationary Hawkes process are quite different, since it is required that $\mathbb{E}[\int_0^{\infty} K] < 1$ and $K \geq 0$. It is conceivable that finding the best linear unbiased predictor of $N$ in some situations will result in an estimated $K$ which does not respect one or both of these conditions, so that it cannot properly be interpreted as a classical Hawkes process. This is similar to an estimated autoregressive process failing to obey the stability condition on the roots of the autoregressive polynomial. On the other hand, there exist modifications of the Hawkes process where $\lambda(t, B)$ is replaced by its positive part $\lambda_+(t, B)$ \citep{costa_renewal_2018}. Moreover, the stability criterion ensures existence over the whole temporal domain $\mathbb{R}$ and it may not be immediately relevant if the analysis focuses rather on a small time interval $[0, T]$.

We have shown in Theorem \ref{thm:wh} that best linear unbiased prediction (in the mean-square sense) of a stationary marked point process results in a particular kind of integral equation, which has links with the Hawkes process. In the next section, we discuss solving this equation through either an auxiliary differential system or through an innovations representation. Both methods of solution are recursive in a sense to be defined subsequently.


\section{Computing the best linear predictor}

\label{sec:computing_blp}

After establishing a generalised Wiener-Hopf equation in Theorem \ref{thm:wh}, we now turn to methods of solution. From a statistical perspective, we are most interested in solving the equation for $G$ once $C$ has been estimated from paths of the process. This results in a plug-in estimator for $G$ whose properties depend on both the covariance estimation and the method of solution of the integral equation. 

\subsection{Existing methods}

The approach outlined above, namely to estimate $C$ and thereby obtain an estimate of $G$ by solving the integral equation, has been applied in several different domains. In spatial statistics, where the marks lie in $\mathbb{R}^2$, it is usually known as Kriging \citep{diggle_statistical_2014}. A relevant reference is the work of \cite{gabriel_estimating_2017}, which describes a spatial methodology that resembles ours.

In time series analysis, the original problem of mean-square linear prediction, as well as the name ``Wiener filter'' is due to \cite{wiener_extrapolation_1949}. The solution of the so-called Yule-Walker equations by the Durbin-Levinson algorithm \citep{brockwell_time_2006,whittle_fitting_1963} is a classical technique with applications to $\text{AR}(p)$, $\text{MA}(q)$, and $\text{ARMA}(p,q)$ processes and selection of the orders $p,q$.

Leaving aside the generality of Theorem \ref{thm:wh}, we will treat the case where the mark space $\mathbf{M}$ is discrete, with marks taking values in $\left\{1, \ldots, d\right\}$. The point process can then be interpreted as vector of univariate point processes,
$$
\mathbf{N}(t) = 
\begin{pmatrix}
N(t, 1), 
N(t, 2),
\cdots,
N(t, d)
\end{pmatrix}^\top =
\begin{pmatrix}
N_1(t), 
N_2(t),
\cdots,
N_d(t)
\end{pmatrix}^\top.
$$
In this specific case the linear predictor takes the form
\begin{equation}
\widehat{\lambda}_i(t) = \widehat{\lambda}_{0,i} + \int_0^s \sum_{j=1}^d G_{ji}(t, u | s) N_j(du)
\label{eq:lin_predictor_multivariate}
\end{equation}
where the $G_{ij}(t, u | s) = G(t, u, j, i | s)$ can be collected into a matrix $\mathbf{G}$.
The process described above is also called the multivariate Hawkes process \citep[e.g.]{laub_hawkes_2015}. The Wiener-Hopf equation then reads
\begin{equation}
\mathbf{c}(t-r) = \mathbf{G}(t, r | s) + \int_0^t \mathbf{G}(t, \tau | s) \,\mathbf{c}(\tau - r)\, \mathrm{d}\tau
\label{eq:wh_simplified}
\end{equation}
where $\mathbf{c}(t)$ is the matrix of covariance measures. 


\subsection{Autoregressive representation}
\label{sec:dl}

Since work by Kailath and co-authors \citep{kailath_innovations_1970, kailath_note_1972, kailath_new_1973, sidhu_development_1974,sidhu_shift-invariance_1974} as well as \cite{lindquist_optimal_1974, lindquist_fredholm_1975-1} it is known that solution of the continuous-time linear prediction problem (Wiener filtering in white or Gaussian noise) can be effected in a recursive way. In this series of papers, the authors in effect develop a continuous-time analogue to the Durbin-Levinson recursions for solving the Wiener-Hopf equation. The result is a set of differential equations, one of which describes the forward time evolution (from $0$ to $s$) and the other in the reverse direction.

The integral equation \eqref{eq:wh_simplified} is a Fredholm integral of the second type whose kernel is the covariance function of $N$, and whose data is $C(t-\cdot)$. Its solution $G$ is the linear predictor weighting function of \eqref{eq:gfun}. The resolvent formalism of integral equations (see e.g. \cite{bellman_functional_1957}) expresses $G$ as a function of the resolvent $R$ of $C$ and the data, that is,
\begin{equation}
G(t, r | s) = c(t - r) + \int_0^t R(r, t, u) \mathbf{c}(t-u) \mathrm{d}u
\end{equation}

By Proposition 2.2~of \cite{lindquist_fredholm_1975-1}, we have the Bellman-Krein equation
\begin{equation}
\frac{\partial \mathbf{G}}{\partial s}(t, r | s) = - \mathbf{G}(t, s | s) \mathbf{G}(s, r | s)
\end{equation}
Suppose for the remainder of the section that $s = t$, i.e.~the process is observed up to, but not including time $t$. It is then useful to rewrite the above equation in terms of the smoothing functions
$$
\mathbf{F}(t, r) = \textbf{G}(t, t-r | t), \ \textbf{F}^*(t, r) = \textbf{G}(0, r | t)
$$
and the function $\boldsymbol{\Gamma}(t) = \mathbf{G}(t, 0 | t)$. This leads to the forward and backward equations
\begin{align}
\frac{\partial \mathbf{F}}{\partial t}(t, r) &= - \boldsymbol{\Gamma}(t) \mathbf{F}^*(t, t-r) \\
\frac{\partial \mathbf{F}^*}{\partial t}(t, r) &= - \boldsymbol{\Gamma}(t)^\top\mathbf{F}(t, t-r)
\label{eq:fb_diff}
\end{align}
with boundary conditions $\mathbf{F}(t, t) = \boldsymbol{\Gamma}(t)$, $\mathbf{F}^*(t, t) = \boldsymbol{\Gamma}(t)^\top$. The corresponding integral equations are
\begin{align*}
\mathbf{F}(t, r) &= \mathbf{\Gamma}(t) - \int_{r}^t \mathbf{\Gamma}(\tau) \mathbf{F}^*(\tau, \tau - r) \opd \tau\\
\mathbf{F}^*(t, r) &= \mathbf{\Gamma}(t)^\top - \int_{r}^t \mathbf{\Gamma}(\tau)^\top \mathbf{F}(\tau, \tau - r) \opd \tau.
\end{align*}
We also note the useful relationships
\begin{align}
\boldsymbol{\Gamma}(t) &=\mathbf{C}(t) - \int_0^t \mathbf{C}(t-r) \mathbf{F}^*(t, r) \opd r = \left(\mathbf{C}(t) - \int_0^t \mathbf{C}(r) \mathbf{F}^*(t, t-r) \opd r \right)\mathbf{V}^*(t)^{-1}\\
\boldsymbol{\Gamma}^*(t) &= \left(\mathbf{C}(-t) - \int_0^t  \mathbf{C}(r-t) \mathbf{F}(t, r)  \opd r\right)\mathbf{V}(t)^{-1}
\label{eq:integr_gamma}
\end{align} 
where
$$
\mathbf{V}(t) = \mathbf{C}(0) - \int_0^t \mathbf{C}(r) \mathbf{F}^*(t, r) \opd r,
$$
and which derive from the Wiener-Hopf integral equations combined with the definition of $\mathbf{\Gamma}$.

Together, all the formulae above may be seen as the continuous-time analogue of \citeauthor{whittle_fitting_1963}'s \citeyearpar{whittle_fitting_1963} algorithm, which extends the Durbin-Levinson algorithm to multivariate processes. A first-order semi-implicit Euler discretisation of the system can be shown to recover Whittle's algorithm. It is conceivable that other numerical schemes applied to the differential system \eqref{eq:fb_diff} will also yield good results; to this end, see simulations in Section \ref{sec:simulations}.

\begin{remark}
The function $\mathbf{\Gamma}(h)$ can be understood as a partial covariance (correlation) function of the process at two times $h$ lags apart, exactly like the partial autocovariance (correlation) function (PACF) in time series analysis. In fact, one has \citep{lindquist_optimal_1974}
\begin{equation}
\mathbf{\Gamma}(h) = \mathbb{E}\left[\left(\boldsymbol{\lambda}(t+h)- \widehat{\boldsymbol{\lambda}}(t+h)\right)\left(\boldsymbol{\lambda}(t)- \widehat{\boldsymbol{\lambda}}(t)\right)\right]  
\end{equation}
To our knowledge, use of such a PACF in point process analysis is  novel and could find interesting applications, particularly when choosing a suitable model.
\end{remark}

\subsection{Example: AR process}
\label{sec:ar_example}

In his PhD thesis, Thibault Jaisson has shown \citep{jaisson_market_2015} that the best linear predictor of an unmarked point process $N$ is related to the Hawkes process, and its parameters depend on the first two moment measures of $N$. He called this the model-independent origin of the Hawkes process, and argued that the Wiener-Hopf equation (which we generalised in Theorem \ref{thm:wh}) directly characterised the Hawkes process. In this example, we will go further and show that the Durbin-Levinson type estimator is perfectly suited to the Hawkes process, since then the smoothing function has support only within a finite interval of fixed length.

\begin{proposition}
Let $N$ be a multivariate Hawkes process, with conditional intensity
\begin{equation}
\boldsymbol{\lambda}(t) = \boldsymbol{\eta} + \int_0^t \mathbf{K}(t-s) \mathbf{N}(\mathrm{d}s).
\end{equation}
and suppose the supports of $\mathbf{K}_{ij}$ are contained in $[0, H_{ij})$. Then the functions $\mathbf{\Gamma}_{ij}$ cut off after $H_{ij}$ in the sense that $\mathbf{\Gamma}_{ij}(h) = 0$ for $h > H_{ij}$.
\label{prop:ar_cutoff}
\end{proposition}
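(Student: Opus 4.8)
The plan is to exploit the ``model-independent origin'' of the Hawkes process: at $s=t$ a Hawkes process with kernel $\mathbf{K}$ satisfies the Wiener--Hopf equation of Theorem~\ref{thm:wh} with that very kernel, so the smoothing function $\mathbf{F}(t,\cdot)$ must coincide with $\mathbf{K}$, and the claimed cut-off of $\boldsymbol{\Gamma}$ then falls out immediately from $\boldsymbol{\Gamma}(h)=\mathbf{G}(h,0\mid h)$ together with the identification.

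To make this precise I would first note that the conditional intensity $\boldsymbol{\lambda}(t)=\boldsymbol{\eta}+\int_{-\infty}^{t}\mathbf{K}(t-s)\,\mathbf{N}(\mathrm{d}s)$ is already an affine-linear functional of the past increments; by square-integrability of $\mathbf{N}$ (a consequence of the stability condition on $\mathbf{K}$) and a routine Riemann-sum closure argument it lies in $\mathcal{L}_t$, and it is unbiased since its expectation is the stationary mean intensity. Because the best linear unbiased predictor at $s=t$ is the projection of $\boldsymbol{\lambda}(t)$ onto $\mathcal{L}_t$ and $\boldsymbol{\lambda}(t)$ already belongs to $\mathcal{L}_t$, this projection equals $\boldsymbol{\lambda}(t)$ itself, with zero prediction error. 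Comparing this with the generic representation \eqref{eq:lin_predictor_multivariate} of $\widehat{\boldsymbol{\lambda}}(t\mid t)$, and using uniqueness of the integral representation of elements of $\mathcal{L}_t$ (or, equivalently, uniqueness of the solution of the Wiener--Hopf equation, to which the bounded, boundedly supported $\mathbf{K}$ belongs), I obtain $\mathbf{G}(t,u\mid t)=\mathbf{K}(t-u)$, up to the index/transpose convention built into \eqref{eq:lin_predictor_multivariate}.

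It then only remains to translate this identity into the stated vanishing. From $\mathbf{F}(t,r)=\mathbf{G}(t,t-r\mid t)=\mathbf{K}(r)$ the smoothing function is independent of $t$ and, entrywise, supported in $[0,H_{ij})$; and $\boldsymbol{\Gamma}(h)=\mathbf{G}(h,0\mid h)=\mathbf{K}(h)$, so $\boldsymbol{\Gamma}_{ij}(h)=0$ whenever $h\geqslant H_{ij}$, in particular for $h>H_{ij}$. The one step requiring care — the main obstacle — is the identification $\mathbf{G}(t,u\mid t)=\mathbf{K}(t-u)$: a priori $\boldsymbol{\lambda}(t)$, viewed as an element of $\mathcal{L}_t$, could admit a different integral representation. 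This is resolved either by observing that the representing kernel is unique up to $M_1$-null sets (for a Hawkes process the support of $(T_i,M_i)$ is all of $[0,t)\times\mathbf{M}$), or, more transparently, by directly verifying that the shifted kernel $\mathbf{K}$ solves the Wiener--Hopf equation of Theorem~\ref{thm:wh} at $s=t$ — the classical covariance identity for marked, multivariate Hawkes processes — and invoking uniqueness of that solution in a suitable reflexive space such as $L^2(\mathbb{R}\times\mathbf{M}^2)$. The remaining bookkeeping (transposes, and matching the one-sided history $[0,t)$ with the two-sided stationary convention $(-\infty,t)$) is routine.
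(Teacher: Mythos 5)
Your proof is correct, and its central step is the same as the paper's: you establish that for a window long enough to contain the support of $\mathbf{K}$, the best linear predictor coincides with the data-generating Hawkes intensity, so that $\mathbf{G}(t,u\mid t)=\mathbf{K}(t-u)$ (the paper asserts this by exhibiting $\mathbf{K}$ as the unique solution of the Wiener--Hopf equation; you argue it via idempotence of the projection, since $\boldsymbol{\lambda}(t)$ already lies in $\mathcal{L}_t$, and then fall back on the same uniqueness discussion --- the two routes are equivalent and you are right to flag uniqueness of the representing kernel as the only delicate point). Where you genuinely diverge is in the concluding step. You read the cut-off directly off the definition $\boldsymbol{\Gamma}(t)=\mathbf{G}(t,0\mid t)$, obtaining $\boldsymbol{\Gamma}_{ij}(h)=\mathbf{K}_{ij}(h)=0$ for $h>H_{ij}$; this is more elementary, is pure bookkeeping once the identification is in hand, and in fact yields the stronger statement that $\boldsymbol{\Gamma}$ equals the Hawkes kernel itself (for the process started at $0$, or for $h>H_{ij}$ in the stationary two-sided convention, which is all the proposition needs). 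The paper instead invokes the partial-autocorrelation representation $\boldsymbol{\Gamma}(h)=\mathbb{E}\bigl[(\boldsymbol{\lambda}(t)-\widehat{\boldsymbol{\lambda}}(t))(\boldsymbol{\lambda}(0)-\widehat{\boldsymbol{\lambda}}(0))'\bigr]$ and kills it using the independent increments of the innovation martingale $\mathbf{M}$. That route buys the conceptual analogy the section is built around --- $\boldsymbol{\Gamma}$ is the continuous-time PACF, and the Hawkes process is the AR model whose PACF cuts off --- at the cost of a slightly informal identification of the residual with $\mathbf{M}(\mathrm{d}t)$; your route avoids that step entirely. Either argument suffices; no gap.
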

\begin{proof}
In this setting the linear predictor corresponds exactly to the data generating process: both $\boldsymbol{\lambda}$ and $\boldsymbol{\widehat{\lambda}}$ are a linear combination of the past $\mathcal{H}_t$. Indeed, if we write the linear predictor of $\boldsymbol{\lambda}(t)$ based on the history $[t, t-h)$, the corresponding integral equation \eqref{eq:wh-generalised} admits the solution
$$
G(t, t-r | t) = \begin{cases}
K(t-r) &\text{if } 0 \leqslant r < H_{ij} \\
0 &\text{if } H_{ij} < r < h
 \end{cases}
$$
and this solution is unique. Thus 
\begin{align}
\widehat{\boldsymbol{\lambda}}_{j}(t) &= \widehat{\lambda}_{0, j} + \sum_{i=1}^d \int_{0}^{H_{ij}} K_{ij}(t-r) N_i(\mathrm{d}r) \\
&= \widehat{\lambda}_{0, j} + \sum_{i=1}^d \int_{0}^{+\infty} K_{ij}(t-r) N_i(\mathrm{d}r) 
\end{align}
and now we have
\begin{align*}
\mathbf{\Gamma}(h) &= \mathbb{E}\left[\left(\boldsymbol{\lambda}(t) - \widehat{\boldsymbol{\lambda}}(t)\right) \left(\boldsymbol{\lambda}(0) - \widehat{\boldsymbol{\lambda}}(0)\right)'\right] \\
&= \mathbb{E}\left[\varepsilon_N (\mathrm{d}t)\left(\boldsymbol{\lambda}(0) - \widehat{\boldsymbol{\lambda}}(0)\right)'\right] = 0
\end{align*}
because of the independent increments of $\varepsilon_N$.
\end{proof}

The cut-off of $\mathbf{\Gamma}$ implies that the recursions \eqref{eq:fb_diff} are computed only for values of $t$ within the support of $K$, allowing for increased computational efficiency.

In the next section, we derive another algorithm for computing the linear predictor, using the innovations representation.

\subsection{Innovations representation}
\label{sec:innov}

Instead of the autoregressive representation \eqref{eq:lin_predictor_multivariate}, it is also possible to express the linear predictor as a function of the innovations process. Namely, if we set
\begin{equation}
\widehat{\boldsymbol{\lambda}}(t) = \int_0^t \boldsymbol{\Theta}(t, t-u) \left[ \mathbf{N}(\mathrm{d}u)  - \widehat{\boldsymbol{\lambda}}(u) \,\mathrm{d}u\right],
\label{eq:innov_predict}
\end{equation}
then the matrix function $\boldsymbol{\Theta}$ is determined by the following set of recursive equations. 

\begin{theorem}[Innovations algorithm]
Given the innovations predictor \eqref{eq:innov_predict}, the matrix function $\boldsymbol{\Theta}$ and variance function $\mathbf{V}$ satisfy the following equations:
\begin{align}
\boldsymbol{\Theta}(t, h) &= \left(\mathbf{C}(h) - \int_0^{t-h} \boldsymbol{\Theta}(t-h,x) \mathbf{V}(t-h-x) \boldsymbol{\Theta}^\top(t,x+h) \,\mathrm{d}x\right) \mathbf{V}(t-h)^{-1} \label{eq:innov_theta}\\
\mathbf{V}(h) &= \boldsymbol{\Gamma}(0) - \int_0^h \boldsymbol{\Theta}(t,u) \mathbf{V}(h-u) \boldsymbol{\Theta}(t, u)^\top\mathrm{d}u \label{eq:innov_var}
\end{align}
\label{thm:innov}
\end{theorem}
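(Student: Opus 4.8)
The plan is to establish Theorem \ref{thm:innov} as the continuous-time, point-process counterpart of the multivariate innovations algorithm of \cite{brockwell_time_2006}, exploiting the two structural facts that drive the discrete version. First, because the linear histories are nested, $\mathcal{L}_u \subseteq \mathcal{L}_v$ for $u \le v$, the \emph{linear innovation} increments $\mathbf{e}(\mathrm{d}u) := \mathbf{N}(\mathrm{d}u) - \widehat{\boldsymbol{\lambda}}(u\,|\,u)\,\mathrm{d}u$ are mutually orthogonal: for $u < v$ one has $\mathbf{N}(\mathrm{d}u), \widehat{\boldsymbol{\lambda}}(u\,|\,u) \in \mathcal{L}_v$, hence $\mathbf{e}(\mathrm{d}u) \in \mathcal{L}_v$, while $\mathbf{e}(\mathrm{d}v)$ is by construction orthogonal to $\mathcal{L}_v$. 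Second, the Gram--Schmidt passage from $\{\mathbf{N}(\mathrm{d}u) : u < t\}$ to $\{\mathbf{e}(\mathrm{d}u) : u < t\}$ is causal and causally invertible, so the two families span the same closed subspace $\mathcal{L}_t$; since $\widehat{\boldsymbol{\lambda}}(t\,|\,t)$ is the projection of $\boldsymbol{\lambda}(t)$ onto $\mathcal{L}_t$, it admits a representation of the form \eqref{eq:innov_predict} for some kernel $\boldsymbol{\Theta}$, and I take $\mathbf{V}(\cdot)$ to be the associated (per-unit-time) innovation covariance, equivalently the prediction-error covariance function that the recursion is meant to determine. Establishing that $\widehat{\boldsymbol{\lambda}}$ genuinely has the form \eqref{eq:innov_predict} is itself part of the proof; the remainder consists in converting the orthogonality relations into the recursions \eqref{eq:innov_theta}--\eqref{eq:innov_var}.

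For \eqref{eq:innov_theta} I would compute $\langle \boldsymbol{\lambda}(t), \mathbf{e}(\mathrm{d}v)\rangle$ for $v < t$ in two ways. On one hand, $\boldsymbol{\lambda}(t) - \widehat{\boldsymbol{\lambda}}(t\,|\,t) \perp \mathcal{L}_t \ni \mathbf{e}(\mathrm{d}v)$, so by \eqref{eq:innov_predict} and orthogonality of the innovations this inner product equals $\boldsymbol{\Theta}(t, t-v)\,\mathbf{V}(v)\,\mathrm{d}v$. On the other hand, since $\mathbf{e}(\mathrm{d}v)$ is $\mathcal{H}_t$-measurable for $v < t$, the smoothing identity \eqref{eq:compensator_def} lets me replace $\boldsymbol{\lambda}(t)\,\mathrm{d}t$ by $\mathbf{N}(\mathrm{d}t)$ inside the expectation; I then expand $\mathbf{e}(\mathrm{d}v) = \mathbf{N}(\mathrm{d}v) - \widehat{\boldsymbol{\lambda}}(v\,|\,v)\,\mathrm{d}v$, use the decomposition of $\mathbb{E}[\mathbf{N}(\mathrm{d}t)\mathbf{N}(\mathrm{d}v)^\top]$ into its Dirac (diagonal) part plus the smooth covariance density $\mathbf{c}(t-v)$ exactly as in the derivation following Theorem \ref{thm:wh}, and substitute the representation \eqref{eq:innov_predict} for $\widehat{\boldsymbol{\lambda}}(v\,|\,v)$. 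Equating the two expressions and performing the change of variables $h = t - v$, together with a relabelling of the inner integration variable, yields \eqref{eq:innov_theta}. This is the step I expect to carry the most bookkeeping — tracking on which slices $\boldsymbol{\Theta}(\cdot,\cdot)$ is evaluated and handling the mixed Dirac/absolutely-continuous second-moment measure under the integral — but no idea beyond the discrete template is required.

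Equation \eqref{eq:innov_var} I would obtain from the Pythagorean decomposition $\mathbb{E}[\boldsymbol{\lambda}(t)\boldsymbol{\lambda}(t)^\top] = \mathbb{E}\big[\widehat{\boldsymbol{\lambda}}(t\,|\,t)\widehat{\boldsymbol{\lambda}}(t\,|\,t)^\top\big] + \mathbf{V}(t)$ implied by orthogonality of the error and the projection, expanding the middle term via \eqref{eq:innov_predict} and orthogonality of the innovations so that it becomes $\int_0^t \boldsymbol{\Theta}(t,u)\,\mathbf{V}(\cdot)\,\boldsymbol{\Theta}(t,u)^\top\,\mathrm{d}u$, and identifying $\mathbb{E}[\boldsymbol{\lambda}\boldsymbol{\lambda}^\top]$ (the lag-zero partial correlation) with $\boldsymbol{\Gamma}(0)$. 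The main obstacle — the part needing genuine care rather than calculation — is the continuous-time rigour: making the innovation ``white noise'' and the Dirac masses honest objects (e.g.\ by working with the integrated innovation process as an orthogonal stochastic measure), justifying the interchange of $\mathbb{E}$ with the stochastic integrals and the Fubini steps, verifying that $\mathcal{L}_t$ is indeed the closed span of the innovations (causal invertibility of the orthogonalisation), and pinning down the initial/boundary data that make the coupled Volterra system \eqref{eq:innov_theta}--\eqref{eq:innov_var} well posed and uniquely solvable. A cleaner but less self-contained alternative would be to discretise time, apply the multivariate innovations algorithm of \cite{brockwell_time_2006} on the grid, and pass to the mesh-zero limit, in parallel with the Euler/Whittle remark made for the autoregressive representation.
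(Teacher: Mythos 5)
Your proposal follows essentially the same route as the paper: equate $\mathbb{E}\big[\mathbf{N}(\mathrm{d}t)\,(\mathbf{N}(\mathrm{d}u)-\widehat{\boldsymbol{\lambda}}(u)\,\mathrm{d}u)^\top\big]$ computed via the projection property and orthogonality of the innovations (giving $\boldsymbol{\Theta}(t,t-u)\mathbf{V}(u)$) with its expansion obtained by substituting the representation \eqref{eq:innov_predict} for $\widehat{\boldsymbol{\lambda}}(u)$, then change variables $h=t-u$. The only differences are cosmetic — you work with $\boldsymbol{\lambda}(t)$ and invoke the smoothing identity where the paper works directly with $\mathbf{N}(\mathrm{d}t)$ — and you additionally supply a Pythagorean derivation of \eqref{eq:innov_var}, which the paper's written proof in fact omits.
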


\begin{proof}
Since $\mathbf{\widehat{N}}(t)$ in \eqref{eq:innov_predict} is a projection onto $\mathcal{L}_s$, 
\begin{align*}
&\mathbb{E}\left[\mathbf{N}(\mathrm{d}t)\left(\mathbf{N}(\mathrm{d}u) - \widehat{\boldsymbol{\lambda}}(u) \opd u\right)^\top \right] = \mathbb{E}\left[\widehat{\boldsymbol{\lambda}}(\mathrm{d}t)\left(\mathbf{N}(\mathrm{d}u) - \widehat{\boldsymbol{\lambda}}(u)\,\mathrm{d}u\right)^\top\right]\\
&= \int_0^t \boldsymbol{\Theta}(t, t-x) \mathbf{E}\left[\left(\mathbf{N}(\mathrm{d}x) - \widehat{\boldsymbol{\lambda}}(x)\mathrm{d}x\right)\left(\mathbf{N}(\mathrm{d}u) - \widehat{\boldsymbol{\lambda}}(u)\mathrm{d}u\right)^\top\right] \\
&= \boldsymbol{\Theta}(t, t-u)\, \mathbb{E}\left[\left(\mathbf{N}(\mathrm{d}u) - \widehat{\boldsymbol{\lambda}}(u)\mathrm{d}u\right)\left(\mathbf{N}(\mathrm{d}u) - \widehat{\boldsymbol{\lambda}}(u)\mathrm{d}u\right)^\top\right]
\end{align*}
which we rewrite
\begin{equation}
\boldsymbol{\Theta}(t, t-u) \mathbf{V}(u) = \mathbb{E}\left[\mathbf{N}(\mathrm{d}t)\left(\mathbf{N}(\mathrm{d}u) - \widehat{\boldsymbol{\lambda}}(u) \opd u\right)^\top \right]
\label{eq:orthog_innov_var}
\end{equation}
and in particular when $u = 0$,
$$
\boldsymbol{\Theta}(t, t) \mathbf{V}(0) = \mathbb{E}\left[\mathbf{N}(\mathrm{d}t), \mathbf{N}(\mathrm{d}x)_{x=0}^\top \right] = \mathbf{C}(t) \mathrm{d}t
$$
Now substituting \eqref{eq:innov_predict} into \eqref{eq:orthog_innov_var} we find
\begin{align}
&\boldsymbol{\Theta}(t, t-u) \mathbf{V}(u) =
\mathbb{E}\left[\mathbf{N}(\mathrm{d}t)\mathbf{N}(\mathrm{d}u)^\top \right] - \mathbb{E}\left[\mathbf{N}(\mathrm{d}t)\widehat{\boldsymbol{\lambda}}(u)^\top \opd u\right] \\
 &=\mathbb{E}\left[\mathbf{N}(\mathrm{d}t)\mathbf{N}(\mathrm{d}u)^\top \right] - \mathbb{E}\left[\mathbf{N}(\mathrm{d}t)\int_0^u  \left( \mathbf{N}(\mathrm{d}x) - \widehat{\boldsymbol{\lambda}}(x) \opd x\right)^\top \boldsymbol{\Theta}(u, u-x)^\top\right] \opd u \nonumber \\
&=\mathbf{C}(t-u) - \int_0^u  \boldsymbol{\Theta}(t, t-x) \mathbf{V}(x)  \boldsymbol{\Theta}^\top(u, u-x) \opd x \opd u
\end{align}
which yields \eqref{eq:innov_theta} when the substitution $u = t-h$ is carried out.

\end{proof}


As in the autoregressive representation, it is straightforward to discretise equations (\ref{eq:innov_theta}-\ref{eq:innov_var}) and this recovers the classical innovations algorithm in several dimensions, see e.g.~\cite{gomez_multivariate_2016}.


  

\subsection{Example: MA process}

The innovations form of the best linear predictor is best applied to a process composed purely of `innovations'. In the point process case, this is the Neyman--Scott process, in which innovation events following a Poisson process each generate an inhomogeneous cluster of events. See also the comments in \S2.4 of \cite{wheatley_arma_2018}.

Suppose thus that $\mathbf{N} = (N_1, \ldots, N_d)'$ is a Neyman-Scott process with shot-noise $\Theta$, a $d \times d$ matrix function. The structure is that of a Cox process whose driving intensity is a multivariate Poisson process $\mathbf{N}_0$ (we need not give more details about $\mathbf{N}_0$ for this explanation). Given the latent process $\mathbf{N}_0$, $\mathbf{N}_j$ is formed by attaching a cluster of points with intensity $\Theta_{ij}$ centred at each point of the $i$-th coordinate of $\mathbf{N}_{0}$. Thus, the conditional intensity takes the form
\begin{equation}
\boldsymbol{\lambda}(t) = \int_0^t \mathbf{\Theta}(t-s) \mathbf{N}_0(\mathrm{d}s). \label{eq:ns}
\end{equation}

\begin{remark}
The intensity \eqref{eq:ns} and the Hawkes process intensity are somewhat similar. The difference lies in the points which contribute to the intensity: in the Neyman-Scott process only the latent points generate a cluster, while in the Hawkes process every point becomes the progenitor of its own cluster.
\end{remark}

It can be shown just as in Proposition \ref{prop:ar_cutoff} that the innovations algorithm applied to \eqref{eq:ns} operates within the support of $\mathbf{\Theta}$, which reduces its computational burden. The key is to notice that the covariance measure of $N$ has this same support.




\subsection{Estimators}

As mentioned previously, both the Durbin-Levinson algorithm of Section \ref{sec:dl} and the Innovations algorithm of Section \ref{sec:innov} generalise the discrete-time algorithms they are named after. When analysing point process data, a simple approach is then to discretise the real line, bin the data, and apply the corresponding discrete-time algorithm. 
However, binning only makes sense if the discretisation is fine enough, as explained in \cite{kirchner_estimation_2017}. Such a fine discretisation results in very large sparse systems, however, and this may cancel the advantage gained from using recursive estimation methods.

Instead, another choice is to remain in the continuous-time domain and estimate $C$ accordingly. Then, either of the two previously mentioned methods can be used in conjunction with a differential equation solver to obtain a solution in terms of $\mathbf{F}$ or $\mathbf{\Theta}$.

In spatio-temporal processes, $C$ is most commonly estimated using a parametric kernel function. Usually, the variogram, a variant on the covariance measure, is estimated instead, but this does not impact our work. If $C$ is defined using a parameter $\beta$, say $C = C_\beta$, we could for instance estimate $\beta$ by least-squares. Common parametric forms for $C$ include the Brownian, Mat\'{e}rn and squared exponential kernels. 

Finally, we may choose to estimate $C$ non-parametrically, but it is not clear how to do this for general marks. In the case of a spatial process on the plane, some authors \citep{huang_nonparametric_2011} have proposed a spectral approach, but this does not generalise in a straightforward way to more complex marks.

\section{Numerical illustration}
\subsection{Simulations}
\label{sec:simulations}

In this section, we investigate the performance of our proposed method on two idealised examples. 

We first mention the work of \cite{bacry_first-_2016}, who pioneer  estimation of the Hawkes kernel by solving the Wiener-Hopf integral equation. They solve a discretised version of the latter using a Nystr\"{o}m method after estimating the conditional law densities in a non-parametric fashion. This method also obtains the kernel $K$ via solution of the Fredholm integral equation, but the method of inversion is more general and does not incorporate the Toeplitz character of the operator involved. In fact, the Durbin-Levinson algorithm can be seen as an improved method for factorising Toeplitz matrices specifically, which further justifies its use beyond the analogies developed in Section \ref{sec:dl}. In the numerical examples, we have also implemented a ``direct inversion'' method in the spirit of \cite{bacry_first-_2016}. We emphasise, however, that this paper aims to discuss more general parallels with discrete-time linear prediction and not specifically the estimation of the Hawkes process, for which many methods have been proposed. 

We assess the performance gains of our proposed approach over a direct inversion approach. To solve the differential system \eqref{eq:fb_diff}, we propose to use several simple numerical schemes in addition to Whittle's method: the forward and backward Euler methods, Runge-Kutta's second-order method and the midpoint method. The average squared errors in recovering $K$ ($L^2$ error) over 300 runs are shown in the tables below; in the first case, the true covariance function $c$ is used and the error is due only to the numerical scheme. In the second, we use an estimator for $C$ due to \cite{brillinger_estimation_1976}. Average runtimes are also reported over the 300 simulations for each example. 

\paragraph{Example 1: Exponential kernel}

Let $N$ be the unmarked Hawkes process with exponential kernel $K(t) = \alpha e^{-\beta t}$. This is the simplest possible version of the Hawkes process.
The covariance measure $c(t)$ was derived by \cite{hawkes_point_1971} and reads
$$
c(t) = \frac{\lambda \alpha (2\beta - \alpha)}{2(\beta - \alpha)} e^{-(\beta - \alpha)t}
$$
with an additional atom of mass $\lambda$ at the origin. 

The parameters chosen are $\alpha = 1.5$, $\beta = 2$, $\eta = 1$ and the process is observed over $[0, 500]$.  The simulation results are shown in Table \ref{tab:sim1} below. We immediately notice that most methods yield similar results when applied to the estimated covariance measure. By comparison, when the true covariance function is provided, the error is smaller by several orders of magnitude but varies between the different methods. The fully implicit backward Euler method is slower due to the inversion of a linear system at each step. Second-order methods also show the cost of extra steps by being somewhat slower than the first-order methods.

\begin{table}[ht]
\begin{tabular}{lrrrrrr}
Method & Inversion & Whittle & Forward E. & Backward E. & RK2 & Midpoint\\
\hline
Estimated ($10^{-2}$) & 7.44 & 7.44 & 7.32 & 7.43 & 7.58 & 7.45\\
True ($10^{-5}$) & 1.38 & 1.38 & 17.28 & 16.05 & 6.44 & 2.43\\
Time taken (ms) & 4.13 & 5.14 & 5.00 & 346.72 & 8.93 & 8.64
\end{tabular}
\label{tab:sim1}
\caption{Simulation results on Example 1}
\end{table}



\paragraph{Example 2: Hawkes process with inhibitory sinusoidal kernel}

The second example concerns a Hawkes process $N$ with kernel function
$$
K(t) = e^{-3t} \sin(6t).
$$
which takes negative values in a small interval (see Figure \ref{fig:sinexp} below).
\begin{figure}[ht]
\begin{center}
\begin{tikzpicture}
\begin{axis}[axis lines=middle,ymax = 1,xmax = 2]
  \addplot[domain=0:2, samples = 1000] {e^(-3*x) * sin(360/(2*pi) * 6*x)};
\end{axis}
\end{tikzpicture}
\label{fig:sinexp}
\caption{Inhibitory sinusoidal kernel of Example 2}
\end{center}
\end{figure}
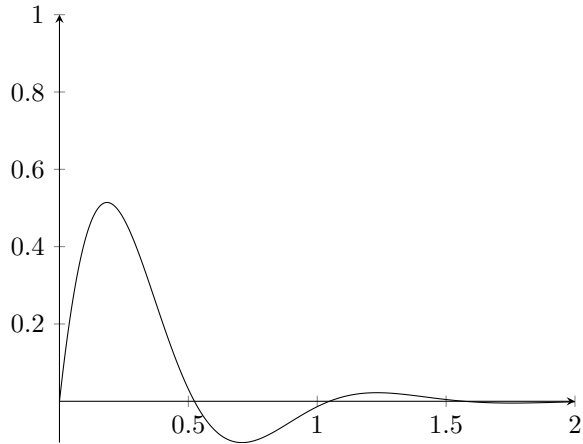

\begin{table}[ht]
\begin{tabular}{lrrrrrr}
& Inversion & Whittle & Forward E. & Backward E. & RK2 & Midpoint\\
\hline
Estimated ($10^{-2}$) & 2.29 & 2.39 & 2.37 & 2.39 & 2.40 & 2.39\\
True ($10^{-4}$) & 1.68 & 1.58 & 1.65 & 1.57 & 1.49 & 1.54 \\ \hline
Time taken (ms) & 7.34 & 37.49 & 22.65 & 1345.71 & 41.77 & 64.22
\end{tabular}
\label{tab:sim2}
\caption{Simulation results on Example 2}
\end{table}

Laplace transforms show that this corresponds to the covariance function
$$
c(t) = \sqrt{\frac{6}{5}} e^{-3t} \sin(\sqrt{30}t)
$$
and the performance of our proposed method is similar (see Table \ref{tab:sim2}).


In both cases, we have seen that first-order methods usually give good approximations in reasonable time. While direct inversion can be used as a convenient baseline for the examples shown, it is clear that for more complex excitation functions, we will require a finer grid on which to compute the predictor. Because direct inversion (using e.g. the QR decomposition) has cubic complexity in the number of bins, the other methods will begin to outperform it.

\section{Conclusion}

We have described linear prediction of a marked point process, and in doing so, we saw that classical concepts of linear prediction and orthogonality can be easily used in the point process setting.

While we gave some rough ideas for the practical application of linear filtering to marked point processes, it is clear that an analysis of the statistical properties of our estimators is much desired. Additionally, the case of continuous marks must be elucidated with respect to the recursive algorithms described in Section \ref{sec:computing_blp}.
 
In the context of linear prediction, two models -- the Hawkes and Neyman-Scott processes -- played key roles, similar to the complementary roles played by $\text{AR}(p)$ and $\text{MA}(q)$ in time series analysis. It would therefore be worth pursuing the idea of an $\text{ARMA}$ point process, which could become a standard model for point process data. A good amount of progress has been made by \cite{wheatley_arma_2018}, who describes an EM algorithm for estimation of what he calls the ARMA point process, but several characteristics remain unclear, namely the spectral properties of the process and the construction of prediction intervals.

Finally, we did not discuss a frequency-domain approach. This has been set out in some detail in \cite{daley_introduction_2003-1}, and recursive estimation does not play the important role it does in a time-domain approach.

\clearpage

\bibliographystyle{plainnat}
\bibliography{Hawkes}

\section{Appendix}

\subsection{Proof of Theorem \ref{thm:wh}}
\label{app:proof_thm1}

Recall that up to a constant,
$$
\left\| \widehat{\lambda}(t, B) - \lambda(t, B) \right\|^2 = \left\| \widehat{\lambda}(t, B) \right\|^2 - 2 \left\langle \widehat{\lambda}(t, B), \lambda(t, B) \right\rangle
$$
and that we may write $\widetilde{N} = N - \mathbb{E}[N]$ the centred version of $N$ (and similarly with the other processes). Suppressing in the notation for now the dependence on $s$, we then find that
\begin{align*}
&\| \widehat{\lambda}(t, B) \|^2 = \mathbb{E}\left[\left(\widehat{\lambda}_0(B) + \int_{-\infty}^s \int_\mathbf{M} G(t, u, B, \mu) N(\mathrm{d}u, \mathrm{d}\mu) - \widebar\lambda F(B)\right)^2\right] \\
&= \mathbb{E}\left[\left(\int_{-\infty}^s \int_\mathbf{M} G(t, u, B, \mu) \widetilde{N}(\mathrm{d}u, \mathrm{d}\mu)\right)^2\right]\\
&= \mathbb{E}\left[\int_{-\infty}^s \int_\mathbf{M}  \int_{-\infty}^s \int_\mathbf{M} G(t, u, B, \mu) G(t, v, B, z) \widetilde{N}(\mathrm{d}u, \mathrm{d}\mu) \widetilde{N}(\mathrm{d}v, \mathrm{d}z) \right] \\
&= \int_{-\infty}^s \int_\mathbf{M} \int_{-\infty}^s \int_\mathbf{M} G(t, u, B, \mu) G(t, v, B, z) \ \mathbb{E}\left[\widetilde{N}(\mathrm{d}u, \mathrm{d}\mu)\widetilde{N}(\mathrm{d}v, \mathrm{d}z)\right] \\
&= \int_{-\infty}^s \int_\mathbf{M} \int_{-\infty}^s \int_\mathbf{M} G(t, u, B, \mu) G(t, v, B, z) \left[\mathrm{d}u \,C_2(d(u-v), d\mu, dz) +\widebar{\lambda} F(d\mu) \delta(\mu-z) \delta(u-v) du\right] \\
&= \int_{-\infty}^s \int_\mathbf{M} \int_{-\infty}^s \int_\mathbf{M} G(t, u, B, \mu) G(t, v, B, z) C_2(d(u-v), d\mu, dz) + \int_{-\infty}^s \int_\mathbf{M} \widebar{\lambda} G(t, u, B, \mu)^2 F(\mathrm{d}\mu)  \mathrm{d}u
\end{align*} 
by stationarity.
For the second term,
\begin{align*}
\left\langle \widehat{\lambda}(t, B), \lambda(t, B) \right\rangle &= \left\langle \widetilde{\widehat{\lambda}}(t, B), \widetilde{\lambda}(t, B) \right\rangle \\
&= \mathbb{E}\left[\int_{-\infty}^s \int_\mathbf{M} G(t, u, B, \mu)  \widetilde{\lambda}(t, B) \widetilde{N}(\mathrm{d}u, \mathrm{d}\mu) \right] \\
&= \int_{-\infty}^s \int_\mathbf{M} G(t, u, B, \mu)\, C_2(t-u, \mathrm{d}\mu, B) \, \mathrm{d}u.
\end{align*}
Note that since $t<u$ there is no atomic part unlike in the previous set of equations.

We see that any minimiser of the mean-square error, and the best linear predictor in particular, also minimises the following functional:
\begin{align*}
J(G) &:= \int_{-\infty}^s \mathrm{d}u \int_\mathbf{M} G(t, u, B, \mu) \\
&\Big[\int_{-\infty}^s \int_\mathbf{M} G(t, v, B, z) C_2(u-v, d\mu, dz) \opd v+ \widebar{\lambda} G(t, u, B, \mu)^2 F(\mathrm{d}\mu) - 2C_2(t-u, \mathrm{d}\mu, B)
\Big]
\end{align*}
\noindent The minimiser $G^*$of the above quantity necessarily verifies
$$
\mathrm{d}_{\gamma}J(G)\big\vert_{G=G^*} = 0
$$
where $\mathrm{d}_{\gamma}F$ is the Gateaux derivative of a functional $F$ in the direction $\gamma$. Applying this to $J$ with a generic test function $\gamma$, we find
\begin{align*}
&J(G^* + p\gamma) - J(G^*) = p^2 \int_{-\infty}^s \mathrm{d}u \int_\mathbf{M}\gamma(t, u, B, \mu)\int_{-\infty}^s \int_\mathbf{M} \gamma(t,v,B,z) C_2(u-v,\mathrm{d}\mu,\mathrm{d}z) \opd v\\
&\quad+ p^2 \int_{-\infty}^s \mathrm{d}u\int_\mathbf{M}\widebar{\lambda} F(\mathrm{d}\mu) \gamma(t, u, B, \mu) \, \gamma(t,u, B,\mu) \\
&\quad-2p \int_{-\infty}^s\mathrm{d}u \int_\mathbf{M} \gamma(t, u,B, \mu) C_2(t-u, \mathrm{d}\mu, B)\\
&\quad+ p\int_{-\infty}^s \mathrm{d}u\int_\mathbf{M} \gamma(t, u, B, \mu) \int_{-\infty}^s \int_\mathbf{M} G^*(t, v, B, z) C_2(u-v, \mathrm{d}\mu, \mathrm{d}z) \opd v\\
&\quad+ p\int_{-\infty}^s \mathrm{d}u\int_\mathbf{M} \gamma(t,u, B, \mu)  \widebar{\lambda} F(\mathrm{d}\mu) G^*(t, u, B, \mu) \\
&\quad + p\int_{-\infty}^s \mathrm{d}u\int_\mathbf{M} G^*(t, u, B, \mu) \int_{-\infty}^s \int_\mathbf{M} \gamma(t,v, B, z) C_2(u-v,\mathrm{d}\mu,\mathrm{d}z) \opd v \\
&\quad + p \int_{-\infty}^s \mathrm{d}u\int_\mathbf{M}G^*(t,u, B, \mu) \widebar{\lambda}F(\mathrm{d}\mu) \gamma(t,u,B,\mu)
\end{align*}
The first two terms are $o(p)$, and gathering together the other terms we find
\begin{align*}
\frac{J(G^* + p\gamma) - J(G^*)}{p} &= 
\int_{-\infty}^s \mathrm{d}u \int_\mathbf{M}\gamma(t,u, B, \mu) \bigg[ 2 \widebar\lambda F(\mathrm{d}\mu) G^*(t,u,B,\mu) - 2 C_2(t-u,\mathrm{d}\mu,B) \\
& + 2 \int_{-\infty}^s \int_\mathbf{M} G^*(t,v, B, z) C_2(u-v,\mathrm{d}\mu,\mathrm{d}z)\opd v \bigg]  + o(1).
\end{align*}
Since the test function $\gamma$ is arbitrary, a necessary condition for an optimal solution $G^*$ is that the integrand in square brackets vanishes identically, that is,
\begin{equation}
C_2(t-u, \mathrm{d}\mu, B) = \widebar{\lambda} F(\mathrm{d}\mu) G^*(t, u, B, \mu) + \int_{-\infty}^s \int_\mathbf{M} G^*(t, v, B, z) C_2(u-v, \mathrm{d}\mu, \mathrm{d}z) \opd v
\label{eq:wh-marked}
\end{equation}
or in integral form,
\begin{equation}
C_2(t-u, A, B) = \widebar{\lambda} F(A) G^*(t, u, B, A) + \int_{-\infty}^s \int_\mathbf{M} G^*(t, v, B, z) C_2(u-v, A, \mathrm{d}z) \opd v
\label{eq:wh-marked-2}  
\end{equation}
for every $u > 0$ and $A \in \mathcal{M}$.

\subsection{Discretisation of the forward differential system}

In \cite{whittle_fitting_1963}, the best linear predictor coefficients $\mathbf{A}_{p,1}, \ldots, \mathbf{A}_{p,k}$ of order $p$ for a multivariate process $\mathbf{x}_t$ are found by solving the forward and backward Yule-Walker equations,
\begin{align}
\sum_{k=0}^p \mathbf{A}_{p,k} \mathbf{\Gamma}_{j-k} &= 0\  (j = 1, \ldots, p)\\
\sum_{k=0}^p \mathbf{A}_{p,k} \mathbf{\Gamma}_{-k} &= V
\end{align}
and
\begin{align}
\sum_{k=0}^p \mathbf{A}^*_{p,k} \mathbf{\Gamma}_{k-j} &= 0 \ (j = 1, \ldots, p)\\
\sum_{k=0}^p \mathbf{A}^*_{p,k} \mathbf{\Gamma}_k &= V^\top.
\end{align}
The solution is recursively determined as 
\begin{align}
\mathbf{A}_{p+1,k} &= \mathbf{A}_{p,k} + \mathbf{A}_{p+1, p+1} \mathbf{A}^*_{p, p-k+1} \label{eq:recurs_matr} \\  
\mathbf{A}^*_{p+1,k} &= \mathbf{A}^*_{p,k} + \mathbf{A}^*_{p+1, p+1} \mathbf{A}_{p, p-k+1}
\end{align}
for $ k = 1, \ldots, p$ and 
\begin{align}
\mathbf{A}_{p+1, p+1} &= -\Delta_p \left(\mathbf{V}_p^*\right)^{-1} \label{eq:err_mat1} \\
\mathbf{A}^*_{p+1, p+1} &= -\Delta^*_p \left(\mathbf{V}_p\right)^{-1}. \label{eq:err_mat2}
\end{align}
where $\Delta_p = \sum_{k=0}^p \mathbf{A}_{p,k} \Gamma_{p-k+1}$ and $\mathbf{V}_p = \sum_{k=0}^p \mathbf{A}_{p,k} \Gamma_{-k}$.

We will show, by analogy with \cite{whittle_fitting_1963} that a first-order discretisation of the integral equation
\begin{equation}
K \star C = 0,  
\end{equation}
for prescribed $C$ and where $K(0) = I$, obeys these same equations. This compact representation of the Wiener-Hopf equation is obtained by modifying $C$ and $K$ to have atoms at the origin.

To begin, note that
$$
\frac{\partial K}{\partial t}(t, s) = \Gamma(t) K^*(t, t-s)
$$
is discretised to
$$
K_{t+1, s} - K_{t, s} = \Gamma_t K^*_{t, t-s} = K_{t+1,t+1} K^*_{t, t-s}
$$
which corresponds to \eqref{eq:recurs_matr} and similarly with $K_{t+1,s}^*$. To establish (\ref{eq:err_mat1}-\ref{eq:err_mat2}) suppose that a solution $K(t, s)$ is known for $0 \leqslant s \leqslant t$, such that
$$
\int_0^t K(t, u) C(s - u) = 0
$$
holds for $0 \leqslant s \leqslant t$. Then the equation on a larger interval $[0, t+h)$ can be written
\begin{align}
&\int_0^{t+h} K(t+h, u) \,C(t+h-u) \opd u\\
&= \int_0^t K(t+h, u) \, C(t+h-u) \opd u + \int_t^{t+h} K(t+h, u) \, C(t+h-u) \opd u = 0 \label{eq:partition}
\end{align}
Substituting the differential equation for $K$ into \eqref{eq:partition} then yields
$$
K(t+h, t+h) \left[ \int_0^t K^*(t, t-u) C(t+h-u) \opd u\right] = -\int_0^t K(t, u) C(t+h - u) \opd u
$$
which matches $\eqref{eq:err_mat1}$.

\label{app:discretised_system}
\end{document}